\documentclass{article}

\usepackage{arxiv}

\usepackage[utf8]{inputenc}
\usepackage[T1]{fontenc}
\usepackage{hyperref}
\usepackage{url}
\usepackage{booktabs}
\usepackage{amsfonts}
\usepackage{amsmath}
\usepackage{amssymb}
\usepackage{amsthm}
\usepackage{nicefrac}
\usepackage{microtype}
\usepackage{graphicx}
\usepackage{doi}

\newtheorem{theorem}{Theorem}
\newtheorem{definition}{Definition}
\newtheorem{axiom}{Axiom}
\newtheorem{corollary}{Corollary}
\newtheorem{proposition}{Proposition}

\title{Trustless Provenance Trees: A Game-Theoretic Framework for\\
Operator-Gated Blockchain Registries}

\author{
  \href{https://orcid.org/0009-0001-6031-4066}{\includegraphics[scale=0.06]{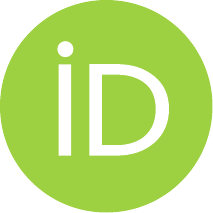}\hspace{1mm}Ian C. Moore, PhD}\\
  Founder, AnchorRegistry\\
  \texttt{imoore@anchorregistry.com}
}

\renewcommand{\shorttitle}{Trustless Provenance Trees}

\hypersetup{
pdftitle={Trustless Provenance Trees: A Game-Theoretic Framework for Operator-Gated Blockchain Registries},
pdfsubject={cs.GT},
pdfauthor={Ian C. Moore},
pdfkeywords={blockchain, provenance, game theory, mechanism design, cryptographic commitment, smart contracts, Ethereum},
}

\begin{document}
\maketitle

\begin{abstract}
We present a formal treatment of provenance trees — directed acyclic graphs of artifact registrations anchored immutably on a public blockchain — and introduce the operator trust problem: when a single privileged operator submits all on-chain registrations on behalf of users, the on-chain record alone cannot distinguish user-initiated registrations from unilateral operator actions. We resolve this through a dual-layer cryptographic commitment scheme in which two commitments derived from a single client-side secret key — binding the key to the tree root and to each unique registration identifier — make false attribution claims strictly dominated strategies. We prove correctness under standard cryptographic assumptions and establish honest behavior as the unique Nash equilibrium without relying on operator trust.
We further introduce and analyze the tree poisoning problem — adversarial attacks on users' provenance trees via fraudulent root registration, malicious child attachment, and tree identity spoofing. We characterize the closure properties of each attack variant and prove that a complete provenance tree integrity model requires three distinct mechanisms: cryptographic priority, governance cascade, and contract enforcement, each necessary and none individually sufficient.
The construction is deployed on Base (Ethereum L2) as AnchorRegistry, an immutable on-chain provenance registry. We provide gas complexity analysis demonstrating O(1) cost invariant to registry scale, and a trustless reconstruction algorithm recovering the complete registry from public event logs alone.
\end{abstract}

\keywords{blockchain \and provenance \and game theory \and mechanism design
\and cryptographic commitment \and smart contracts \and Ethereum \and Base L2}

\section{Introduction}
\label{section:intro}

The proliferation of AI-generated content has created an urgent need for
infrastructure that can establish immutable provenance records for digital
artifacts. As AI systems train on, derive from, and recombine prior work,
the question of who created what, when, and from what prior artifacts becomes
both legally significant and technically difficult to answer. Blockchain
technology offers a natural substrate for such infrastructure: an append-only,
tamper-resistant ledger that can serve as a permanent timestamped record of
artifact existence and authorship.

A natural design for such a system is an \emph{operator-gated registry}: a
smart contract in which a privileged operator submits registrations on behalf
of users who have authenticated off-chain via a payment flow. This design
avoids requiring users to hold cryptocurrency or manage gas fees --- a
significant practical advantage --- but introduces a fundamental trust
question: how can a verifier, examining only the on-chain record, determine
whether a given registration was initiated by the purported user or by the
operator acting unilaterally?

We call this the \emph{operator trust problem}. Prior work on blockchain
timestamping~\cite{Hepp18, Gipp15} addresses proof of existence but does not
consider the attribution problem in operator-gated settings. NFT-based
ownership systems~\cite{Wang21} address ownership transfer but not artifact
provenance lineage. Decentralized identifier systems~\cite{W3C22} address
identity but not per-artifact initiation proof.

This paper makes the following contributions:

\begin{enumerate}
\item We formally define the \emph{provenance tree} as a mathematical object
      with seven structural properties (Section~\ref{section:tree}).

\item We formalize the operator trust problem as a strategic game and show
      that without additional mechanism design, the on-chain record is
      insufficient to resolve attribution disputes
      (Section~\ref{section:trust}).

\item We introduce the \emph{dual-layer commitment scheme} --- two commitments
      derived from a single client-side key that together constitute a
      trustless initiation proof --- and prove its correctness under standard
      cryptographic assumptions (Section~\ref{section:commitment}).

\item We analyze the resulting system as a mechanism design problem,
      demonstrate that false attribution claims are strictly dominated
      strategies, and introduce the \emph{tree poisoning game} analyzing
      adversarial attacks on other users' provenance trees
      (Section~\ref{section:equilibrium}).

\item We describe the implementation on Base (Ethereum L2) as AnchorRegistry,
      with a reference implementation provided as an open-source Python
      package~\cite{AnchorPy26} with full documentation~\cite{AnchorDocs26},
      including gas complexity analysis and trustless reconstruction
      (Section~\ref{section:implementation}).
\end{enumerate}

\section{The Provenance Tree}
\label{section:tree}

We begin by formalizing the provenance tree as a mathematical object.

\begin{definition}[Provenance Tree]
A \emph{provenance tree} $\mathcal{T} = (V, E, \lambda)$ is a directed acyclic
graph where:
\begin{itemize}
  \item $V$ is a finite set of artifact anchors (nodes)
  \item $E \subseteq V \times V$ is a set of directed provenance edges
  \item $\lambda: V \to \mathcal{M}$ assigns each node a metadata record from
        the metadata space $\mathcal{M}$
  \item There exists a unique root $r \in V$ with no incoming edges
  \item Every node $v \in V$ is reachable from $r$
\end{itemize}
\end{definition}

Each node $v \in V$ carries the following fields in its metadata record
$\lambda(v) \in \mathcal{M}$:

\begin{itemize}
  \item $\mathsf{arId}(v)$: a globally unique artifact identifier
  \item $\mathsf{type}(v)$: an artifact type from a fixed taxonomy
  \item $\mathsf{manifestHash}(v)$: SHA-256 hash of the artifact manifest
  \item $\mathsf{treeId}(v)$: a tree identity commitment (defined in Section~\ref{section:commitment})
  \item $\mathsf{tokenCommitment}(v)$: a per-anchor initiation commitment
        (defined in Section~\ref{section:commitment})
  \item $\mathsf{parentArId}(v)$: the $\mathsf{arId}$ of the parent node,
        empty for the root
\end{itemize}

We require the following structural properties of any valid provenance tree:

\begin{proposition}[Structural Properties of Provenance Trees]
\label{prop:structural}
A provenance tree $\mathcal{T}$ satisfies the following properties:
\begin{description}
  \item[P1 -- Immutability.] Once written to the blockchain, no node can be
        modified or deleted. The graph is append-only.

  \item[P2 -- Uniqueness.] $\forall u, v \in V, u \neq v \Rightarrow
        \mathsf{arId}(u) \neq \mathsf{arId}(v)$. Enforced by the
        \texttt{registered} mapping in the smart contract.

  \item[P3 -- Ancestry Integrity.] $\forall (u, v) \in E$: $u$ must be
        registered before $v$. Enforced by the \texttt{\_validateBase}
        function requiring \texttt{registered[parentArId]} prior to
        registration.

  \item[P4 -- Tree Membership.] $\forall v \in V$:
        $\mathsf{treeId}(v) = \mathsf{treeId}(r)$. All nodes share the
        same tree identity.

  \item[P5 -- Reconstructibility.] Given the contract address and deploy
        block number, the complete graph $\mathcal{T}$ is recoverable from
        the public blockchain event log in $O(|V|)$ time, with no dependency
        on off-chain infrastructure.

  \item[P6 -- Tree Ownership.] $\exists K$ such that
        $H(K \| \mathsf{arId}(r)) = \mathsf{treeId}(r)$, where $H$ is a
        cryptographic hash function and $K$ is the owner's secret key.

  \item[P7 -- Initiation Proof.] $\forall v \in V$ with
        $\mathsf{tokenCommitment}(v) \neq \mathbf{0}_{32}$:
        $\exists K$ such that $H(K \| \mathsf{arId}(v)) =
        \mathsf{tokenCommitment}(v)$.
\end{description}
\end{proposition}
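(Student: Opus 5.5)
The proposition is less a theorem than a list of invariants that every provenance tree produced by the registry contract must satisfy. I will therefore prove it by induction on the sequence of successful registration transactions, taking the contract semantics as the model. The state is the set of emitted registration events together with the \texttt{registered} mapping. The invariant is that the induced graph $(V,E,\lambda)$ satisfies P1--P7. In the base case (empty registry, or a single root registration) every property holds trivially or by inspection of the root's fields. In the inductive step a new node $v$ with parent $\mathsf{parentArId}(v)$ is appended, and I check that each property is preserved.

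I will group the properties by the mechanism that enforces them.

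\begin{itemize}
\item \textbf{P1} follows from the EVM execution model. The contract exposes no function that overwrites or deletes a stored record or emitted log, and logs are part of the immutable chain history. Appending therefore never changes existing nodes.
\item \textbf{P2} follows because registration reverts whenever \texttt{registered[arId]} is already true, and sets it to true otherwise. Two distinct nodes therefore cannot carry the same $\mathsf{arId}$.
\item \textbf{P3} follows because \texttt{\_validateBase} requires \texttt{registered[parentArId]} before the write, and transactions are totally ordered. So the parent precedes the child. The same ordering gives acyclicity: any edge $(u,v)$ goes from an earlier to a later registration, so no directed cycle exists.
\item \textbf{Reachability from $r$} follows by induction. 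Each non-root node's parent is already in $V$ and hence already reachable.
\item \textbf{P5} follows because every node and edge is emitted as an event. One linear scan of the logs from the deploy block rebuilds $V$ and $E$ in $O(|V|)$ time, using a hash map keyed by $\mathsf{arId}$ for parent lookup. No off-chain data is needed, by P1.
\end{itemize}

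The cryptographic properties P6 and P7 are existential claims. For P6 the witness is the key $K$ generated client-side when the root is created. By construction the client submits $\mathsf{treeId}(r)=H(K\|\mathsf{arId}(r))$, so the $K$ used at creation witnesses the claim. For P7, when $\mathsf{tokenCommitment}(v)\neq\mathbf{0}_{32}$ the commitment was computed by the client as $H(K\|\mathsf{arId}(v))$, and that $K$ is the witness. The substantive content, that only the legitimate holder can know such a $K$, is deferred to preimage resistance and binding in Section~\ref{section:commitment}. The existence statements themselves are immediate from the protocol's construction.

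The main obstacle is P4, tree membership. It is not obviously enforced on-chain, since the contract cannot check $\mathsf{treeId}(v)=\mathsf{treeId}(\mathsf{parent}(v))$ unless it reads the parent's stored $\mathsf{treeId}$. I would first try to establish that the contract copies or verifies the parent's $\mathsf{treeId}$ at registration, so that P4 follows by induction along the edges to $r$.

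If the contract does not do this, I would change how the tree is defined. The tree would be the set of nodes whose $\mathsf{treeId}$ equals $\mathsf{treeId}(r)$ and whose parent chain stays inside that set. P4 then holds by construction, and any mismatched attachment is treated as a separate structure; this is the tree-poisoning case handled later in the paper. Under that reading the proposition describes the properties of a \emph{valid} tree, as its preamble says ("we require"). This also answers the question of whether P4 and the existential parts of P6 and P7 are proven facts or defining constraints.
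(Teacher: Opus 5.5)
Your proposal is sound and proves what the statement literally asserts. It differs from the paper's route in three places.

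The paper gives no standalone proof. It says P1--P4 are enforced by the contract (the \texttt{registered} mapping and \texttt{\_validateBase}), defers P5 to the log-scan reconstruction of Section~\ref{section:implementation}, and says P6--P7 are established by Theorems~\ref{thm:ownership} and~\ref{thm:initiation}. Your treatment of P1--P3 and P5 uses the same contract-level reasoning.

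First, you phrase the contract facts as an invariant preserved by each registration. This lets you derive acyclicity and reachability from transaction order rather than assume them. State the invariant per root, since the registry as a whole induces a forest.

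Second, for P6--P7 you observe that the statements are only existential and are witnessed by the construction-time key, with no cryptographic assumption. The paper instead routes them through Theorems~\ref{thm:ownership}--\ref{thm:initiation}. Those theorems already assume $T=H(K\|R)$ and $\Phi_i=H(K\|C_i)$ in their hypotheses. What they actually prove is the converse soundness claim: a matching $K'$ can only come from the holder of $K$. Your split is more faithful to what P6--P7 say; the paper's route supplies the security meaning. Both rest on the same honest-construction premise. The contract checks only $\Phi_i\neq\mathbf{0}_{32}$ and cannot check that $\mathsf{treeId}$ or $\Phi_i$ was formed from any particular $K$. So the existence of a witness presumes the operator relayed the client's values faithfully. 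State this explicitly instead of writing ``by construction.''

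Third, for P4 the paper asserts contract enforcement but cites no mechanism beyond the parent-registration check. You make P4 conditional or definitional, and your caution is warranted. Under the permissionless attachment of Variant~2, nothing the paper cites forces a child's $\mathsf{treeId}$ to match its parent's. So P4 is best read as part of what makes a tree valid, consistent with the preamble's ``we require.'' This choice carries over to P5. If $V$ is a single tree, a full scan from the deploy block is linear in the whole registry. The $O(|V|)$ per-tree bound via the indexed $\mathsf{treeId}$ topic returns exactly $V$ only under P4.
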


Properties P1--P4 are enforced at the smart contract level and hold
unconditionally for any valid provenance tree. Properties P5--P7 are the
subject of the remainder of this paper: P5 in
Section~\ref{section:implementation}, and P6--P7 in
Section~\ref{section:commitment}.

\section{The Operator Trust Problem}
\label{section:trust}

\subsection{System Model}

In an operator-gated provenance registry, we have three classes of actors:

\begin{itemize}
  \item \textbf{Users} $\mathcal{U}$: individuals who wish to register
        artifacts. Users authenticate via an off-chain payment flow and
        receive an ownership token $K$ generated client-side.

  \item \textbf{Operator} $\mathcal{O}$: a privileged smart contract role
        that submits all on-chain registrations. The operator is the
        AnchorRegistry service itself. Only whitelisted operator wallets
        can call register functions.

  \item \textbf{Verifiers} $\mathcal{V}$: any party wishing to verify
        provenance claims, including other users, legal entities, AI
        systems, and compliance tools.

  \item \textbf{Adversaries} $\mathcal{A}$: external parties who may attempt
        to corrupt provenance claims, either against the operator or against
        other users' trees.
\end{itemize}

The operator submits every on-chain transaction on behalf of users.
Consequently, the \texttt{registrant} field in every \texttt{Anchored}
event always contains the operator wallet address --- not the user's address.
This is a deliberate design choice that removes the requirement for users to
hold cryptocurrency, but it introduces a fundamental attribution ambiguity.

\subsection{The Attribution Problem}

Without additional mechanism design, the on-chain record alone cannot
distinguish the following two scenarios:

\begin{enumerate}
  \item \textbf{Legitimate registration}: User $u$ authenticates, pays, and
        the operator submits the registration on $u$'s behalf.

  \item \textbf{Unilateral operator action}: The operator submits a
        registration without user authorization.
\end{enumerate}

In both scenarios, the on-chain \texttt{registrant} is the operator wallet.
The manifest hash, artifact type, and metadata are identical. A verifier
examining the blockchain record cannot distinguish these cases.

\subsection{The False Attribution Game}

We formalize this as a strategic game $\Gamma = (N, S, u)$:

\begin{itemize}
  \item \textbf{Players} $N = \{$User, Operator$\}$
  \item \textbf{User strategies} $S_U = \{$\emph{accuse}, \emph{silent}$\}$
  \item \textbf{Operator strategies} $S_O = \{$\emph{register-legitimate},
        \emph{register-unilateral}$\}$
\end{itemize}

Without the dual-layer commitment scheme (Section~\ref{section:commitment}),
the payoff matrix admits a problematic equilibrium: a user can falsely
accuse the operator of unilateral registration at no cost, since the
on-chain evidence is ambiguous. Similarly, an operator can register content
unilaterally without cryptographic detection.

The central insight of this paper is that by introducing appropriate
cryptographic commitments, we can transform this game such that false
accusation becomes a strictly dominated strategy.

\section{The Dual-Layer Commitment Scheme}
\label{section:commitment}

\subsection{Cryptographic Primitives}

We work with the following primitives:

\begin{axiom}[keccak256 Preimage Resistance]
\label{axiom:preimage}
For the hash function $H = \mathsf{keccak256}$: given $y = H(x)$,
finding $x$ is computationally infeasible. Formally, for any probabilistic
polynomial-time adversary $\mathcal{A}$:
\[
\Pr[H(\mathcal{A}(H(x))) = H(x)] \leq \mathsf{negl}(\lambda)
\]
where $\lambda$ is the security parameter.
\end{axiom}

\begin{axiom}[Client-Side Key Secrecy]
\label{axiom:secrecy}
The ownership token $K$ is generated client-side in the user's browser
and is never transmitted to the operator or any server. The operator
has no access to $K$.
\end{axiom}

\begin{axiom}[Registration Identifier Uniqueness]
\label{axiom:unique}
Each registration identifier $C_i$ (\textsf{arId}) is globally unique,
pre-generated via a reservation endpoint, and consumed atomically at
registration time.
\end{axiom}

\begin{axiom}[Operator Gate]
\label{axiom:gate}
Only whitelisted operator wallets can invoke registration functions.
The \texttt{onlyOperator} modifier enforces this at the contract level.
Outside parties have zero registration capability.
\end{axiom}

\begin{axiom}[Commitment Enforcement]
\label{axiom:enforcement}
The smart contract enforces:
\begin{itemize}
  \item Content anchors: $\mathsf{tokenCommitment} \neq \mathbf{0}_{32}$
        (enforced by \texttt{MissingTokenCommitment} revert)
  \item Governance anchors: $\mathsf{tokenCommitment} = \mathbf{0}_{32}$
        (hardcoded, not passable from outside)
\end{itemize}
\end{axiom}

\subsection{Construction}

Let $K \in \{0,1\}^{256}$ be a uniformly random ownership token generated
client-side. Let $R$ denote the root anchor identifier and $C_i$ denote
the $i$-th child anchor identifier. Define:

\begin{align}
T &= H(K \| R) \label{eq:treeid} \\
\Phi_i &= H(K \| C_i) \label{eq:commitment}
\end{align}

where $\|$ denotes concatenation, $T$ is the \emph{tree identity commitment}
written on-chain as \texttt{treeId}, and $\Phi_i$ is the \emph{per-anchor
initiation commitment} written on-chain as \texttt{tokenCommitment}.

The same key $K$ parameterizes both commitments. This is the \emph{unified}
property: a single credential suffices to prove both tree ownership (via $T$)
and per-anchor initiation (via all $\Phi_i$).

\subsection{Security Proofs}

We work under the cryptographic and system axioms stated in
Section~\ref{section:commitment}. All proofs are in the standard
cryptographic style: reductions to well-known assumptions with explicit
negligible functions.

\begin{theorem}[Tree Ownership]
\label{thm:ownership}
Let $T = H(K \| R)$ be the tree identity commitment stored on-chain,
where $K \in \{0,1\}^{256}$ is the secret ownership token generated
client-side, $R$ is the publicly known root anchor identifier, and $H$
denotes the keccak256 hash function.

Then, for any presented key $K'$:
\[
H(K' \| R) = T \implies \text{the presenter holds the original secret } K
\]
(with overwhelming probability, except with negligible probability in $\lambda$).
\end{theorem}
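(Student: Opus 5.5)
The plan is a reduction to Axiom~\ref{axiom:preimage}, organized as a contrapositive. Suppose some probabilistic polynomial-time presenter $\mathcal{P}$, lacking $K$, outputs a key $K'$ with $H(K' \| R) = T$ with non-negligible probability $\epsilon(\lambda)$. Here ``lacking $K$'' means the view of $\mathcal{P}$ consists only of public on-chain data: $T$, $R$, and the commitments $\Phi_i = H(K \| C_i)$ with their public $C_i$. We must also cover the operator, who by Axiom~\ref{axiom:secrecy} has no access to $K$ either. From $\mathcal{P}$ we will build an adversary $\mathcal{A}$ against preimage resistance. Given a challenge $y = H(x)$ with $x = K \| R$, the adversary $\mathcal{A}$ runs $\mathcal{P}$ and outputs $K' \| R$. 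Whenever $\mathcal{P}$ succeeds we have $H(\mathcal{A}(y)) = y$, so $\mathcal{A}$ succeeds with probability $\epsilon$. This contradicts the negligible bound of Axiom~\ref{axiom:preimage}.

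The first step is to fix the statement precisely. ``The presenter holds $K$'' should be read as: any PPT party able to produce a valid opening of $T$ either was given $K$ or succeeds only with probability at most $\mathsf{negl}(\lambda)$. The conclusion is then the disjunction of two cases.
\begin{itemize}
\item $K' = K$: the presenter holds the original secret.
\item $K' \neq K$: a second preimage or collision under $H$. Since the suffix $R$ is fixed and the length $K'\in\{0,1\}^{256}$ is fixed, this is exactly a second preimage of $x$.
\end{itemize}
The first case is the desired conclusion. In the second case we still obtain a preimage of $T$, so the reduction above bounds it by the preimage axiom without needing a separate collision assumption.

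The second step, which I expect to be the main obstacle, is making the reduction's simulation faithful. The real presenter also sees the child commitments $\Phi_i = H(K\|C_i)$, which are correlated with $K$. A bare preimage challenger, however, hands $\mathcal{A}$ only $y = H(K\|R)$. I would try two routes in order.
\begin{itemize}
\item First, strengthen the axiom slightly by modeling $H$ as a random oracle. Then each $\Phi_i$ is an independent uniform value, because the $C_i$ are distinct from $R$ and from each other by Axiom~\ref{axiom:unique}. $\mathcal{A}$ can then simulate these values by sampling them uniformly.
\item Failing that, state the theorem against presenters whose view is just $(T,R)$, and remark that the extension to the full view holds under the random-oracle heuristic.
\end{itemize}
A further subtlety is that $R$ is public, so the challenge distribution is that of $K\|R$ with uniform $K$. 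Axiom~\ref{axiom:preimage} must therefore be read for this input distribution, which has 256 bits of min-entropy. I would note this explicitly.

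Finally, the argument concludes as follows. In the random-oracle variant, the success probability of any $q$-query presenter is at most $q/2^{256}$, which is negligible. Hence $H(K'\|R)=T$ implies $K' = K$ or, equivalently for all practical purposes, knowledge of $K$, except with probability $\mathsf{negl}(\lambda)$. This is the statement of Theorem~\ref{thm:ownership}, and it yields property P6 of Proposition~\ref{prop:structural}.
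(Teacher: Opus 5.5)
Your proposal is correct and is essentially the paper's proof: both convert a presenter that opens $T$ without $K$ into an inverter for $H$ on the challenge $T=H(K\|R)$ with $R$ public, and both invoke Axiom~\ref{axiom:secrecy} to cover the operator; the paper's version is exactly your fallback, in which the presenter's view is only $(T,R)$. The paper passes over the two further issues you flag --- that the real on-chain view also contains the correlated commitments $\Phi_i=H(K\|C_i)$, and that Axiom~\ref{axiom:preimage} must be read for inputs $K\|R$ with a fixed public $R$ --- so your random-oracle treatment tightens the same argument rather than taking a different route (your bound should be about $2(q+1)/2^{256}$, since a query can also hit $T$ by a chance collision, which is still negligible).
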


\begin{proof}
Assume, for contradiction, that there exists a PPT adversary $\mathcal{A}$
that, on input the public values $T$ and $R$, outputs $K'$ such that
$H(K' \| R) = T = H(K \| R)$ without knowledge of $K$.

Because $R$ is public metadata of the root anchor (the \texttt{arIdPlain}
field in the \texttt{Anchored} event), $\mathcal{A}$ knows both $T$ and $R$
and has produced a preimage $x' = K' \| R$ of $T$.

By Axiom~\ref{axiom:preimage}, for any PPT $\mathcal{A}$ and input
$x = K \| R$:
\[
\Pr\bigl[H\bigl(\mathcal{A}(H(x))\bigr) = H(x)\bigr] \leq \mathsf{negl}(\lambda).
\]
The adversary is precisely inverting $H$ on challenge $T$. Success
probability is negligible.

By Axiom~\ref{axiom:secrecy}, the operator has zero knowledge of $K$ at
any point. Only the legitimate holder of $K$ can satisfy the equation with
non-negligible probability. This establishes Property P6. \qed
\end{proof}

\begin{theorem}[Per-Anchor Initiation Proof]
\label{thm:initiation}
Let $\Phi_i = H(K \| C_i)$ be the per-anchor initiation commitment stored
on-chain for a content anchor, where $K \in \{0,1\}^{256}$ is the secret
ownership token generated client-side, $C_i$ is the globally unique artifact
identifier, and $H$ denotes the keccak256 hash function. By design,
$\Phi_i \neq \mathbf{0}_{32}$ for all content anchors.

Then, for any presented key $K'$:
\[
H(K' \| C_i) = \Phi_i \implies \text{the presenter initiated anchor } i
\]
(with overwhelming probability, except with negligible probability in $\lambda$).
\end{theorem}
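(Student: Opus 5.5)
The proof follows the template of Theorem~\ref{thm:ownership}, with the root identifier $R$ replaced by the child identifier $C_i$. Two additional ingredients are needed. First, the initiating party must be tied to the \emph{specific} anchor $i$; Axiom~\ref{axiom:unique} supplies this. Second, a nonzero commitment must actually be present on-chain; Axiom~\ref{axiom:enforcement} supplies this.

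\textbf{Step 1: the verification condition is meaningful.} Since anchor $i$ is a content anchor, Axiom~\ref{axiom:enforcement} guarantees $\Phi_i \neq \mathbf{0}_{32}$. So Property P7 applies, and verification is a genuine preimage check. It is not a vacuous comparison against a default value.

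\textbf{Step 2: preimage reduction.} Suppose, for contradiction, that a PPT adversary $\mathcal{A}$ is given the public pair $(\Phi_i, C_i)$, where $C_i$ appears in the \texttt{Anchored} event. Suppose $\mathcal{A}$ outputs $K'$ with $H(K'\|C_i)=\Phi_i$ with non-negligible probability, without knowing $K$. Then $K'\|C_i$ is a preimage of $\Phi_i$. Exactly as in the proof of Theorem~\ref{thm:ownership}, this inverts $H$ on the challenge $\Phi_i$, contradicting Axiom~\ref{axiom:preimage}.

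\textbf{Step 3: the adversary cannot gain leverage from other anchors.} The same $K$ produces many public values: $T$, and $\Phi_j$ for $j\neq i$. I will argue that seeing these does not help invert $\Phi_i$. By Axiom~\ref{axiom:unique}, $C_i \neq C_j$ for all $j \neq i$. Hence the commitment for anchor $i$ cannot be copied or replayed from another anchor.

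\textbf{Step 4: from knowledge of $K$ to initiation.} By Axiom~\ref{axiom:secrecy}, the operator never holds $K$. Combined with Step 2, the value $\Phi_i$ could only have been computed client-side by the holder of $K$. By Axiom~\ref{axiom:gate}, only the operator can submit registrations. So the submitted $\Phi_i$ must have been supplied by the key-holder at the time $C_i$ was atomically consumed (Axiom~\ref{axiom:unique}). Therefore the key-holder initiated anchor $i$. This establishes Property P7.

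\textbf{Main obstacle: Step 3.} The paper's Axiom~\ref{axiom:preimage} covers only single-target inversion of a uniformly sampled input. Here the adversary sees several correlated outputs of $H$ under the same $K$, so that axiom does not directly rule out using them. I would first try to argue informally that distinct suffixes $C_j$ make these outputs unrelated challenges. If that is not convincing, I would treat keccak256 as a random oracle. In that model, each $H(K\|C_j)$ is an independent random value, so the extra outputs give no information about $K$ beyond what brute-force search over $K\in\{0,1\}^{256}$ already provides. That search succeeds with probability at most $q/2^{256}$ after $q$ queries, which is negligible.
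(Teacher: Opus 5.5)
Your proposal is correct and follows essentially the same route as the paper's proof. Both reduce producing a valid $K'$ to inverting keccak256 on the challenge $\Phi_i$ (Axiom~\ref{axiom:preimage}), with Axioms~\ref{axiom:secrecy}, \ref{axiom:unique} and~\ref{axiom:enforcement} in the same supporting roles: no operator access to $K$, no identifier reuse, and a genuinely non-zero commitment. Your Step~3 and your explicit use of Axiom~\ref{axiom:gate} in Step~4 go beyond the paper, which applies the single-target preimage axiom directly without noting that $T$ and the other $\Phi_j$ are published under the same $K$; your random-oracle fallback is a reasonable way to close that point.
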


\begin{proof}
Assume, for contradiction, that there exists a PPT adversary $\mathcal{A}$
that, on input the public values $\Phi_i$ and $C_i$, outputs $K'$ such that
$H(K' \| C_i) = \Phi_i = H(K \| C_i)$ without knowledge of $K$.

Because $C_i$ is public metadata (the \texttt{arIdPlain} field in the
\texttt{Anchored} event), $\mathcal{A}$ has produced a preimage
$x' = K' \| C_i$ of $\Phi_i$.

By Axiom~\ref{axiom:preimage}, for any PPT $\mathcal{A}$ and input
$x = K \| C_i$:
\[
\Pr\bigl[H\bigl(\mathcal{A}(H(x))\bigr) = H(x)\bigr] \leq \mathsf{negl}(\lambda).
\]

By Axiom~\ref{axiom:secrecy}, the operator has zero knowledge of $K$.
By Axiom~\ref{axiom:unique}, each $C_i$ is globally unique and
pre-generated, so the operator cannot reuse identifiers to fabricate
collisions. By Axiom~\ref{axiom:enforcement}, the contract rejects any
content registration with $\mathsf{tokenCommitment} = \mathbf{0}_{32}$,
so every content anchor carries a genuine non-zero $\Phi_i$.

Consequently, no PPT adversary can produce a valid $K'$ except with
negligible probability. Only the holder of $K$ can satisfy the equation.
This establishes Property P7. \qed
\end{proof}

\begin{theorem}[Governance Separation]
\label{thm:governance}
For any anchor $v \in V$ in a valid provenance tree:
\[
\mathsf{tokenCommitment}(v) = \mathbf{0}_{32}
\iff
v \text{ was registered by the operator in governance capacity.}
\]
\end{theorem}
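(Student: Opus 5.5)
We prove the biconditional by splitting the set of registration functions into two disjoint classes, content and governance, and then using Axiom~\ref{axiom:gate} and Axiom~\ref{axiom:enforcement} to pin down the value of $\mathsf{tokenCommitment}$ in each class. The starting observation is that every anchor $v \in V$ enters the tree through exactly one invocation of a registration function. By Axiom~\ref{axiom:gate}, only a whitelisted operator wallet can make that invocation. By Property P1, the stored $\mathsf{tokenCommitment}(v)$ is whatever value was written at that moment, and it can never change afterward. So it suffices to classify the call that created $v$.

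\emph{($\Leftarrow$)} Suppose $v$ was registered by the operator in governance capacity, meaning through a governance registration function. Axiom~\ref{axiom:enforcement} says governance anchors have $\mathsf{tokenCommitment} = \mathbf{0}_{32}$ hardcoded, and that this value cannot be passed in from outside. The contract therefore writes $\mathbf{0}_{32}$ no matter what calldata the operator supplies, and P1 keeps it there.

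\emph{($\Rightarrow$)} Suppose $\mathsf{tokenCommitment}(v) = \mathbf{0}_{32}$. We argue by contraposition. If $v$ had been created through a content registration function, the \texttt{MissingTokenCommitment} revert in Axiom~\ref{axiom:enforcement} would have rejected a zero commitment. In that case no such node would exist on-chain, which contradicts $v \in V$. Hence $v$ came through a governance function. Since Axiom~\ref{axiom:gate} leaves the operator as the only possible caller, $v$ was registered by the operator in governance capacity.

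The main obstacle is the claim that every registration path is either a content path or a governance path, with no third entry point, and that the content path always carries the revert check. Axiom~\ref{axiom:enforcement} only states the two cases. I would close this gap by reading ``valid provenance tree'' as ``constructed only through the contract's registration functions,'' whose set is fixed by immutability of the deployed code (P1). I would then enumerate these functions and check that each one lies in exactly one class and applies the corresponding rule.

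A further point is not needed for the equivalence but should be recorded. A content anchor's commitment $\Phi_i = H(K \| C_i)$ could in principle equal $\mathbf{0}_{32}$ by accident, but the contract's revert excludes such anchors outright, so the equivalence holds exactly rather than only with overwhelming probability. As a corollary, combining the theorem with Theorem~\ref{thm:initiation} shows that every anchor falls into exactly one of two attributable categories: user-initiated anchors, with a nonzero commitment openable by $K$, and operator governance actions.
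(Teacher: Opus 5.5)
Your proof is correct and uses the same ingredients as the paper's proof: the hardcoded $\mathbf{0}_{32}$ in governance functions gives one direction, the \texttt{MissingTokenCommitment} revert in content functions gives the other, and the operator gate attributes the call to the operator. You pair these facts with the two directions more cleanly than the paper, whose ($\Leftarrow$) paragraph actually re-argues the forward direction; and the paper also assumes the content/governance dichotomy you flag as the main obstacle, silently, through its claim that governance functions are the only ones that write $\mathbf{0}_{32}$.
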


\begin{proof}
($\Rightarrow$) By Axiom~\ref{axiom:enforcement}, the only functions that
write $\mathbf{0}_{32}$ into \texttt{tokenCommitment} are the governance
registration functions. These hardcode $\mathbf{0}_{32}$ internally; the
value is not a parameter and cannot be overridden by any external caller.
All governance functions are further gated by \texttt{onlyOperator}.

($\Leftarrow$) By Axiom~\ref{axiom:enforcement}, content registration
functions revert with \texttt{MissingTokenCommitment} if
$\mathsf{tokenCommitment} = \mathbf{0}_{32}$. No content anchor can carry
the zero value. An anchor carrying $\mathbf{0}_{32}$ must have been created
via a governance function.

The bidirectional implication holds strictly, enforced immutably by the
deployed bytecode on Base (Ethereum L2). \qed
\end{proof}

\begin{corollary}[False Accusation Impossibility]
\label{cor:false}
A user $u$ cannot successfully claim that the operator registered a
content anchor in $u$'s tree without $u$'s authorization.
\end{corollary}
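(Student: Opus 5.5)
The plan is to derive the corollary from Theorems~\ref{thm:initiation} and~\ref{thm:governance}, together with the client-side secrecy of $K$ (Axiom~\ref{axiom:secrecy}). First I will make "successfully claim" precise. A user $u$ who holds key $K$ and tree identity $T = H(K \| R)$ disputes a content anchor $v$ with $\mathsf{treeId}(v) = T$, asserting that the operator created $v$ without $u$'s authorization. A claim succeeds only if the verifier-visible evidence is consistent with unilateral operator action. So the proof must show that, for every content anchor in $u$'s tree, the public record plus a verification check the verifier can run refutes the accusation except with negligible probability.

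The argument splits into two cases on $\mathsf{tokenCommitment}(v)$.

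\textbf{Case $\mathsf{tokenCommitment}(v) = \mathbf{0}_{32}$.} By Theorem~\ref{thm:governance}, $v$ is a governance anchor, not a content anchor. The accusation is therefore about an anchor that carries no initiation claim at all, and it is publicly identified as an operator governance action. The statement restricts to content anchors, so this case only needs to be separated out.

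\textbf{Case $\mathsf{tokenCommitment}(v) = \Phi \neq \mathbf{0}_{32}$.} Axiom~\ref{axiom:enforcement} guarantees this case covers every content anchor. The verifier challenges $u$ to present $K$ and checks two things: $H(K \| R) = T$, which establishes by Theorem~\ref{thm:ownership} that $u$ owns the tree, and $H(K \| C_v) = \Phi$.

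The core step is to argue that the operator could not have produced $\Phi$ without $u$'s participation. By Axiom~\ref{axiom:secrecy} the operator never sees $K$. Producing a value equal to $H(K \| C_v)$ for a fresh, pre-reserved identifier $C_v$ (Axiom~\ref{axiom:unique}) would then require computing a keccak256 value on an unknown 256-bit input. I will argue this requires knowledge of $K$ except with probability at most about $2^{-256}$ plus $\mathsf{negl}(\lambda)$. This reduces to Theorem~\ref{thm:initiation}: any operator strategy that fabricates a verifying $\Phi$ yields a $K$-knowledge extractor or a successful guess.

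From here the dilemma follows. If $u$ refuses to reveal $K$, the accusation is unsupported, because only a key holder could have supplied the commitment. If $u$ reveals $K$ and it verifies, that is itself proof that $u$ initiated $v$.

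The main obstacle is that Theorem~\ref{thm:initiation} is phrased as "valid $K'$ implies presenter initiated". The corollary, by contrast, needs "$\Phi$ on-chain and in $u$'s tree implies $u$ supplied it". The gap is a user who deliberately hands $\Phi$ to the operator, or a $\Phi$ that does not match $u$'s $K$.

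I will handle a mismatch directly. If $H(K \| C_v) \neq \Phi$ while $\mathsf{treeId}(v) = T$, the operator must have written $T$ onto an anchor with a foreign commitment. This is the tree-poisoning scenario, which the paper treats later. Here I will note that an anchor whose $\Phi$ fails to verify under the owner's $K$ is evidence against the anchor, not against the operator acting with the owner's key. The operator-fabrication case therefore reduces to the negligible-probability inversion above.

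I will state this boundary explicitly as the scope of "successfully".
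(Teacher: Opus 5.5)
Your proposal is correct and follows the paper's proof. Both use the same reveal-or-withhold dilemma: a verifying $K$ shows that $u$ initiated the anchor (Theorem~\ref{thm:initiation}), while withholding $K$ leaves an accusation that would require the operator to have produced $H(K \| C_i)$ without $K$. Your additions are refinements of that same route: dispatching $\mathbf{0}_{32}$ anchors via Theorem~\ref{thm:governance}, adding the $H(K\|R)=T$ check, and explicitly scoping out commitments not bound to $u$'s key (a case the paper dismisses with ``no third case exists''). One caveat applies to both you and the paper's Case~2: the forgery step you isolate is an unpredictability property of $H(K\|\cdot)$, not something that reduces to Theorem~\ref{thm:initiation} or preimage resistance.
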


\begin{proof}
Suppose $u$ makes such a claim regarding anchor $v$ with
$\mathsf{tokenCommitment}(v) = \Phi_i \neq \mathbf{0}_{32}$.

\textbf{Case 1}: $u$ produces $K$ such that $H(K \| C_i) = \Phi_i$.
By Theorem~\ref{thm:initiation}, $u$ holds $K$ and therefore initiated
the registration. The claim collapses.

\textbf{Case 2}: $u$ refuses to produce such $K$. This is equivalent to
asserting keccak256 preimage resistance is broken. Under
Axiom~\ref{axiom:preimage}, this claim is rejected as computationally
infeasible.

No third case exists. \qed
\end{proof}

\begin{corollary}[Unified Proof]
\label{cor:unified}
A single ownership token $K$ simultaneously satisfies both
Theorem~\ref{thm:ownership} and Theorem~\ref{thm:initiation} for all
anchors in the tree:
\[
H(K \| R) = T \quad \land \quad \forall i: H(K \| C_i) = \Phi_i.
\]
\end{corollary}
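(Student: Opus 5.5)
The plan is to prove Corollary~\ref{cor:unified} mostly from the construction in Equations~\eqref{eq:treeid} and~\eqref{eq:commitment}. The security content can then be read off from Theorems~\ref{thm:ownership} and~\ref{thm:initiation}.

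\textbf{Step 1: the equalities themselves.} I would first fix the tree $\mathcal{T}$ with root identifier $R$ and content children $C_1, \dots, C_n$, all registered under one client-side token $K$. By Equation~\eqref{eq:treeid}, the value written on-chain as \texttt{treeId} of the root is $T = H(K \| R)$. By Equation~\eqref{eq:commitment}, each content anchor $i$ has \texttt{tokenCommitment} equal to $\Phi_i = H(K \| C_i)$, computed with the same $K$. So the conjunction $H(K\|R)=T \land \forall i: H(K\|C_i)=\Phi_i$ holds by definition.

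\textbf{Step 2: what the verifier actually checks.} The verifier does not read $T$ off $v$ itself but off the root of the tree containing $v$. For this I would invoke Property P4 (Tree Membership) from Proposition~\ref{prop:structural}. It gives $\mathsf{treeId}(v)=\mathsf{treeId}(r)=T$ for every $v$, so one recomputation $H(K\|R)$ ties every node to the same credential.

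\textbf{Step 3: scope of the quantifier.} Governance anchors must be excluded from ``$\forall i$''. By Theorem~\ref{thm:governance} they carry $\mathbf{0}_{32}$, and so no $K$ is expected to match them. The quantifier therefore ranges over content anchors only, which Axiom~\ref{axiom:enforcement} guarantees are exactly the nonzero-commitment nodes.

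\textbf{Step 4: security meaning.} Apply Theorem~\ref{thm:ownership} to the first conjunct and Theorem~\ref{thm:initiation} to each instance of the second. A presenter who satisfies the whole conjunction holds $K$ and initiated every content anchor, except with negligible probability. For the $n+1$ events I would take a union bound, giving failure probability at most $(n+1)\cdot\mathsf{negl}(\lambda)$. This is still negligible because $n=|V|-1$ is polynomial in $\lambda$.

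\textbf{Main obstacle.} The main obstacle is the implicit assumption that all children were in fact committed under the same $K$ as the root. Nothing in P1--P7 as stated forces $\Phi_i$ to use the root's key. An operator or adversary could attach a child whose commitment uses some other $K''$, which is the malicious-child-attachment variant of tree poisoning. I would therefore state the corollary for the honest construction, in which the client derives all commitments from its single $K$. I would flag that enforcing this against third-party attachment is deferred to the tree poisoning analysis in Section~\ref{section:equilibrium}.
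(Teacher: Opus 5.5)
Your proposal is correct and follows the paper's own argument: the conjunction holds by construction because Equations~\ref{eq:treeid} and~\ref{eq:commitment} use the same client-side $K$, and its security meaning comes from Theorems~\ref{thm:ownership} and~\ref{thm:initiation}. Your extra scoping spells out what the paper's one-line ``immediate'' proof leaves implicit: you restrict $\forall i$ to nonzero-commitment content anchors, and to children actually committed under the root's $K$ rather than third-party attachments.
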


\begin{proof}
Immediate from Theorems~\ref{thm:ownership} and~\ref{thm:initiation},
observing that the same client-side secret $K$ parameterizes both
commitments (Equations~\ref{eq:treeid} and~\ref{eq:commitment}). \qed
\end{proof}

\section{Nash Equilibrium Analysis}
\label{section:equilibrium}

\subsection{The Transformed Game}

With the dual-layer commitment scheme in place, we revisit the false
attribution game $\Gamma$ from Section~\ref{section:trust}.

The commitment scheme introduces a \emph{verification challenge}: any
attribution dispute can be settled by demanding the production of $K$
such that $H(K \| C_i) = \Phi_i$. This transforms the payoff structure
across all four games analyzed in this section.

\subsection{The False Accusation Game}

Consider user $u$ contemplating a false accusation against the operator:

\begin{itemize}
  \item \textbf{Strategy: Accuse}. User claims operator registered anchor
        $v$ without authorization. The operator demands production of $K$.
        \begin{itemize}
          \item If $u$ produces $K$: verification passes, $u$ is shown to
                have initiated the registration. \emph{Payoff: self-incrimination.}
          \item If $u$ refuses: claim requires asserting keccak256 is broken.
                \emph{Payoff: claim dismissed.}
        \end{itemize}
  \item \textbf{Strategy: Silent}. User does not make false claims.
        \emph{Payoff: neutral.}
\end{itemize}

Since both outcomes of the \emph{Accuse} strategy are weakly dominated by
\emph{Silent}, false accusation is a strictly dominated strategy.

\subsection{The Enterprise Griefing Game}

Consider an enterprise user $u$ with an ACCOUNT anchor providing $n$
batch registrations, attempting to create $n$ identical registrations and
claim the operator acted unilaterally:

\begin{itemize}
  \item By Axiom~\ref{axiom:unique}, each registration $i$ receives a
        distinct $C_i$ via the reservation endpoint.
  \item Each anchor carries $\Phi_i = H(K_i \| C_i)$ for some $K_i$
        produced at registration time.
  \item To support the accusation, $u$ must produce $K_i$ for each anchor
        such that $H(K_i \| C_i) = \Phi_i$.
  \item Producing any $K_i$ proves $u$ initiated registration $i$.
\end{itemize}

The pattern of $n$ identical manifest hashes is also trivially detectable.
Griefing is therefore both self-incriminating and detectable, making it
a strictly dominated strategy regardless of batch size.

\subsection{The Landlord-Tenant Game}
\label{section:landlord}

We characterize the operator-user relationship as a mechanism design
problem with the following structure:

\begin{itemize}
  \item \textbf{Tenant} (user): has full agency over content lifecycle
        (register, retract). Both actions require non-zero $\Phi_i$,
        which only the user can produce.

  \item \textbf{Landlord} (operator): has governance authority over
        building integrity (REVIEW, VOID, AFFIRMED). All governance
        actions carry $\mathbf{0}_{32}$, hardcoded at contract level.

  \item \textbf{Lease} (smart contract): defines the exact authority
        of each party. Immutably enforced on Ethereum.
\end{itemize}

The Nash equilibrium of this game is the honest equilibrium: both parties
act within their defined roles, since any deviation is either impossible
(blocked by contract) or self-defeating (provably attributable via
commitment).

\begin{proposition}[Honest Equilibrium]
\label{prop:equilibrium}
Under the dual-layer commitment scheme, the unique Nash equilibrium of
the operator-user game is the honest equilibrium in which:
\begin{enumerate}
  \item Users initiate only registrations they intend to make
  \item The operator registers only user-authorized content and
        legitimate governance actions
\end{enumerate}
\end{proposition}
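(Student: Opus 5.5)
The plan is to pin down the game precisely and then show that each non-honest strategy is strictly dominated. With that done, iterated elimination leaves a single strategy profile, which is therefore the unique Nash equilibrium.

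\textbf{Step 1: the game.} I would take $\Gamma$ from Section~\ref{section:trust} and enlarge its strategy sets to match the statement. The operator's strategies are honest registration, unilateral content registration, and fabricated governance. The user's strategies are honest initiation together with accuse or silent. Payoffs are added in the obvious way. A successful accusation pays the accuser $b>0$. A failed accusation costs the accuser a reputational or legal penalty $c>0$. A detected deviation costs the operator $d>0$, and a deviation the contract blocks simply reverts, which costs the deviator gas $g>0$. The qualitative conclusions should hold for any positive values, so I would state the result for all $b,c,d,g>0$.

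\textbf{Step 2: the operator's deviations are eliminated.} I would split unilateral action into two cases.
\begin{itemize}
\item The operator writes a content anchor with $\Phi=\mathbf{0}_{32}$. By Axiom~\ref{axiom:enforcement} the transaction reverts, so this is infeasible or strictly costly.
\item The operator writes a content anchor with a nonzero $\Phi$. By Axiom~\ref{axiom:secrecy} the operator cannot know the user's $K$, so it must use some $K^\ast$ of its own. The user's true $K$ then fails the check $H(K\|C_i)=\Phi$. The user can therefore prove the accusation, and Theorem~\ref{thm:ownership} lets the user show ownership of the tree while disclaiming the anchor. This deviation is detected and yields the operator $-d$.
\end{itemize}
Governance actions always carry $\mathbf{0}_{32}$ by Theorem~\ref{thm:governance}, so they are publicly attributable to the operator. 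An illegitimate governance action is therefore also detected. In every case honest behavior strictly dominates.

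\textbf{Step 3: the user's deviations are eliminated.} Once the operator acts honestly, every content anchor in $u$'s tree carries $\Phi_i=H(K\|C_i)$ with $u$'s own $K$. Corollary~\ref{cor:false} then shows that an accusation yields either self-incrimination or dismissal, both worth $-c<0$, while staying silent yields $0$. Griefing by registering spurious anchors reduces to the same comparison through the enterprise griefing argument. So honest initiation combined with silence strictly dominates.

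\textbf{Step 4: uniqueness.} The difficulty is that Step 3 relies on the operator already being honest. Against a dishonest operator, accusing is actually the user's best response. So the user's strategy is not dominated unconditionally. I would therefore argue by iterated strict dominance: eliminate the operator's deviations first, which requires no assumption about the user, and then eliminate the user's deviations. Iterated strict dominance never removes a Nash equilibrium, and it leaves exactly one profile here. That profile is consequently the unique Nash equilibrium.

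\textbf{Main obstacle.} The main obstacle is Step 2, because the security guarantees hold only up to negligible probability. I would treat a negligible success probability $\epsilon(\lambda)$ as lowering the operator's expected payoff from a deviation to at most $\epsilon b' - (1-\epsilon)d$, where $b'$ is what the operator gains from an undetected deviation. This is below the honest payoff $0$ for sufficiently large $\lambda$. The equilibrium claim is therefore asymptotic, or an $\epsilon$-equilibrium statement, and I would state it that way.
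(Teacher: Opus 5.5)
Your skeleton is the paper's own: Corollary~\ref{cor:false} handles the user, Axiom~\ref{axiom:secrecy} plus the commitment handles unilateral content registration, and the $\mathbf{0}_{32}$ sentinel handles governance. Ordering the eliminations and stating an $\epsilon$-equilibrium improve on the paper's one-paragraph proof, which just asserts that accusing is dominated. The problem is that Steps 2 and 3 judge the same evidence in opposite ways. In Step 2 you take ``the user exhibits $K$ with $H(K\|R)=T$ but $H(K\|C_i)\neq\Phi_i$'' as proof of operator misconduct. But $\Phi_i$ is computed in the browser and the operator never sees $K$, so nobody can check at registration that a submitted $\Phi_i$ comes from the tree key. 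A user can therefore reserve $C_i$, pay, and submit $\mathsf{treeId}=T$ with $\Phi_i=H(K^\ast\|C_i)$ for a throwaway $K^\ast$. An \emph{honest} operator forwards it, and the user then accuses by exhibiting $K$. The on-chain record matches your Step 2 scenario: operator wallet as registrant, non-zero $\Phi_i$, correct $T$. In both cases the only holder of $K^\ast$ declines to reveal it. Your Step 3 claim that ``every content anchor in $u$'s tree carries $H(K\|C_i)$ with $u$'s own $K$'' assumes this deviation away. Which key goes into $\Phi_i$ is the user's choice, and the deviation is in scope, since you include griefing and the paper's griefing game allows per-anchor keys $K_i$.

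Any verdict rule that uses only the chain and the keys the parties choose to reveal therefore breaks one of your two steps. If it accepts this evidence, false accusation earns $b>0$ against an honest operator, and Step 3 fails. If it rejects it, as Case 2 of Corollary~\ref{cor:false} does (no matching key, claim dismissed), then a user accusing a genuinely dishonest operator is dismissed too, the operator never pays $d$, and Step 2 fails. The iterated elimination needs both penalties, so it does not go through. What is missing is something that breaks this symmetry. Examples are a non-repudiable user authorization of $(C_i,\Phi_i)$, such as a signature under a key bound to $T$ or third-party payment records tied to $C_i$, or a proof checked at registration that $T$ and $\Phi_i$ open under the same $K$. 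The paper's proof relies on the same two incompatible readings, so citing it will not close the hole. Governance has a smaller version of the same issue. The sentinel makes a fabricated governance action \emph{attributable} to the operator, not \emph{detectably illegitimate}, so your $-d$ there is assumed rather than derived. The paper avoids this only by declaring every governance action legitimate by design.
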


\begin{proof}
By Corollary~\ref{cor:false}, false accusation is strictly dominated.
By Axiom~\ref{axiom:secrecy} and Theorem~\ref{thm:initiation}, unilateral
operator registration of content anchors is detectable (the operator
cannot produce a valid $\Phi_i$ without $K$). Unilateral operator
governance actions (VOID, REVIEW, AFFIRMED) are legitimate by design and
carry the distinguishing sentinel $\mathbf{0}_{32}$. Therefore neither
party has a profitable deviation from the honest equilibrium. \qed
\end{proof}

\subsection{The Tree Poisoning Game}
\label{section:poisoning}

The preceding three games model adversarial behavior directed at
AnchorRegistry. We now analyze a qualitatively different attack class:
adversarial attacks on the provenance trees of legitimate users. This
is the \emph{tree poisoning problem}, and it is the primary threat to
the core user value proposition --- that a provenance tree constitutes
a tamper-evident IP lineage.

We identify three attack variants, each with distinct closure properties:

\subsubsection*{Variant 1: Fraudulent Root Registration}

An adversary $\mathcal{A}$ registers a fraudulent root anchor
$r_{\mathcal{A}}$ claiming to predate a legitimate root $r_u$ owned by
user $u$, then registers children under $r_{\mathcal{A}}$ to construct
a false IP priority claim.

\textbf{Closure mechanism: Cryptographic Priority.}

Each root produces a distinct treeId:
\[
T_u = H(K_u \| r_u) \neq H(K_{\mathcal{A}} \| r_{\mathcal{A}}) = T_{\mathcal{A}}
\]
since $K_u \neq K_{\mathcal{A}}$ and $r_u \neq r_{\mathcal{A}}$. The two
trees are cryptographically distinct and cannot be confused. Priority is
determined by block timestamp on the public blockchain --- the first
registered root is the legitimate root. This is an objective, tamper-proof
comparison requiring no trust in any party.

Furthermore, $\mathcal{A}$ cannot produce a tree carrying $T_u$ without
knowing $K_u$, by Theorem~\ref{thm:ownership}. The legitimate tree is
therefore unforgeable even if $\mathcal{A}$ knows $r_u$.

\textbf{Nash Equilibrium:} Fraudulent root registration is a strictly
dominated strategy. The adversary cannot claim $u$'s treeId without $K_u$,
and a competing fraudulent tree with a distinct treeId is immediately
distinguishable from the legitimate one.

\subsubsection*{Variant 2: Malicious Child Attachment}

The permissionless lineage model (Property P3) permits any registered
artifact to declare any prior artifact as its parent. An adversary
$\mathcal{A}$ may exploit this by registering a malicious child anchor
$v_{\mathcal{A}}$ with $\mathsf{parentArId}(v_{\mathcal{A}}) = \mathsf{arId}(v_u)$
for some legitimate anchor $v_u$ in user $u$'s tree.

\textbf{Important design note:} This variant is \emph{intentionally
permissionless}. The open attachment model enables organic provenance
growth --- a dataset registered by party A may legitimately be declared
as a parent by party B's model, without requiring A's consent, just as
academic citations require no consent from the cited author. Requiring
attachment consent would destroy this property.

\textbf{Closure mechanism: Governance Cascade.}

Malicious attachment is not prevented cryptographically but is remedied
operationally. The VOID governance action triggers a cascade that removes
the malicious subtree from public visibility:

\begin{enumerate}
  \item Any party may flag a malicious attachment to the operator.
  \item The operator registers a VOID anchor targeting the fraudulent root
        of the malicious subtree (not the legitimate parent).
  \item The VOID cascade suppresses all descendants of the voided anchor
        at the off-chain index layer.
  \item The on-chain record is never modified (P1 -- Immutability);
        suppression is exclusively off-chain.
\end{enumerate}

The key invariant is that VOID actions carry $\mathsf{tokenCommitment} =
\mathbf{0}_{32}$ (Theorem~\ref{thm:governance}), making them permanently
distinguishable from user-initiated content. The suppression action is
attributable to AR governance, not to the legitimate tree owner.

\textbf{Nash Equilibrium:} Malicious child attachment is a dominated
strategy. The attachment is detectable, attributable to $\mathcal{A}$
via $\Phi_{\mathcal{A}} = H(K_{\mathcal{A}} \| C_{\mathcal{A}})$, and
removable via VOID cascade without affecting the legitimate tree.

\subsubsection*{Variant 3: Tree Identity Spoofing}

An adversary $\mathcal{A}$ reads $T_u = \mathsf{treeId}(r_u)$ from the
on-chain event log and attempts to register new anchors carrying $T_u$
in order to appear as a legitimate member of user $u$'s tree.

\textbf{Closure mechanism: Contract Enforcement.}

By Axiom~\ref{axiom:gate}, only the operator wallet can invoke registration
functions. $\mathcal{A}$ has no direct contract access. Any attempt to
register an anchor carrying $T_u$ must pass through the operator, which
validates the tokenCommitment:
\[
\Phi = H(K \| C_i) \neq H(K_u \| C_i) \text{ unless } K = K_u.
\]
By Axiom~\ref{axiom:enforcement}, a zero tokenCommitment reverts. A
non-zero tokenCommitment computed without $K_u$ produces a value that
fails verification against $T_u$. Neither case admits a valid spoofed
anchor in $u$'s tree.

\textbf{Nash Equilibrium:} Tree identity spoofing is strictly dominated.
The contract physically prevents it without requiring any governance action.

\subsubsection*{Summary: Three Closure Mechanisms}

\begin{table}[h]
\centering
\begin{tabular}{llll}
\toprule
Attack Variant & Closure Mechanism & Cryptographic? & Governance? \\
\midrule
Fraudulent root & Cryptographic priority & Yes & No \\
Malicious child & VOID cascade & No & Yes \\
Tree ID spoofing & Contract enforcement & Yes & No \\
\bottomrule
\end{tabular}
\caption{Tree poisoning attack variants and their closure mechanisms.
A complete provenance tree integrity model requires all three mechanisms.
No single mechanism is sufficient to address all variants.}
\label{table:poisoning}
\end{table}

\begin{proposition}[Tree Integrity Completeness]
\label{prop:integrity}
The combination of cryptographic priority (Theorem~\ref{thm:ownership}),
governance cascade (VOID), and contract enforcement
(Axioms~\ref{axiom:gate} and~\ref{axiom:enforcement}) is necessary and
sufficient to close all three tree poisoning attack variants.
\end{proposition}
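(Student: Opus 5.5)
The proof has two halves: sufficiency and necessity. I will treat the three mechanisms as a set $\mathcal{M}=\{M_1,M_2,M_3\}$, where $M_1$ is cryptographic priority, $M_2$ is the VOID governance cascade and $M_3$ is contract enforcement. The three attack variants $V_1,V_2,V_3$ are as above. For each mechanism $M_j$ I record which variants it closes, and I argue that this closure relation is exactly the diagonal of Table~\ref{table:poisoning}.

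\textbf{Sufficiency.} I show $M_j$ closes $V_j$ for each $j$, reusing the per-variant analyses.

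For $V_1$, I invoke Theorem~\ref{thm:ownership}. An adversary without $K_u$ cannot produce a key verifying against $T_u$ except with negligible probability. The adversary's own root yields $T_{\mathcal{A}}=H(K_{\mathcal{A}}\|r_{\mathcal{A}})$, which differs from $T_u$ except with negligible collision probability. Priority between the two roots is then settled by block order, which is fixed by P1 and P3.

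For $V_3$, Axiom~\ref{axiom:gate} routes every registration through the operator. Axiom~\ref{axiom:enforcement} forbids a zero commitment on content anchors. Theorem~\ref{thm:initiation} ensures that a non-zero commitment not derived from $K_u$ cannot be verified as a member of $u$'s tree.

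For $V_2$, I use the VOID cascade together with Theorem~\ref{thm:governance}. The VOID anchor is distinguishable from user content and suppresses exactly the descendant set of the malicious root. That set is disjoint from $u$'s legitimate nodes, because the cascade targets the attacher's anchor and edges point from parent to child.

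\textbf{Necessity.} For each $j$, I exhibit an attack in $V_j$ that survives when only $M_j$ is removed.

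Without $M_2$, a malicious child $v_{\mathcal{A}}$ with a valid $\Phi_{\mathcal{A}}$ of its own passes both $M_1$ and $M_3$. It carries a genuine commitment under its own key, so neither cryptography nor the contract gate rejects it. The permissionless lineage design means nothing blocks the attachment. Hence only an operational remedy can act.

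Without $M_3$, suppose registration were open to outside callers. An adversary could write $T_u$ directly as the \textsf{treeId} field. Theorem~\ref{thm:ownership} only constrains who can \emph{verify} $T_u$, not who can \emph{copy} it into an event. So $M_1$ alone does not stop a spoofed event. $M_2$ also does not suffice as a closure: it acts only after the fact, and the spoof would still be indistinguishable at read time.

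Without $M_1$, suppose the \textsf{treeId} were not bound to a secret. A fraudulent root could then claim the same identity. The gate ($M_3$) passes it because the operator relays it, and VOID ($M_2$) would require the operator to adjudicate priority, which reintroduces operator trust.

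The main obstacle is making necessity rigorous, since the mechanisms are not formally defined objects. I would first fix a minimal formal model in which each mechanism is a predicate on admissible event logs, and then construct the three counter-models explicitly. The hardest of these is likely the case showing $M_1$ is needed given $M_3$. Since the operator validates commitments, one must argue that validation without a key-bound \textsf{treeId} leaves root priority unverifiable by third parties, so that trustlessness rather than mere prevention is what fails. I would therefore state that necessity is relative to the requirement that closure be verifiable from the public log without trusting the operator.
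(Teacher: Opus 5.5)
Your decomposition is the same as the paper's. Sufficiency comes from the three per-variant closure analyses, and necessity comes from deleting one mechanism at a time and exhibiting a variant that survives. Your $M_1$ and $M_2$ cases make essentially the same points as the paper's one-line arguments, with more justification. Restricting necessity to trustless verifiability from the public log is a reasonable sharpening of something the paper leaves implicit.

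The step that does not go through is the necessity of $M_3$. In your sufficiency argument, $V_3$ is closed at verification time. An anchor carrying $T_u$ with a non-zero commitment $\Phi_{\mathcal{A}} = H(K_{\mathcal{A}}\|C)$ fails the test $H(K_u\|C) = \Phi_{\mathcal{A}}$ by Theorem~\ref{thm:initiation}. Nothing in $M_3$ stops that event from being written. The relaying operator never sees $K_u$ (Axiom~\ref{axiom:secrecy}), so it cannot check the \textsf{treeId} field or the commitment against it. Your counter-model, in which ``an adversary could write $T_u$ directly as the \textsf{treeId} field,'' is exactly this event. It was already writable with $M_3$ in place. Once $M_3$ is removed, it is still rejected by the same Theorem~\ref{thm:initiation} check. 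So your claim that the spoof ``would still be indistinguishable at read time'' is false for it, and the case as written shows nothing that depends on $M_3$.

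What Axioms~\ref{axiom:gate} and~\ref{axiom:enforcement} actually exclude is an anchor with $\mathsf{tokenCommitment} = \mathbf{0}_{32}$ that did not come from a gated governance call. That is the counter-model you need. With $M_3$ removed, the adversary registers an anchor with \textsf{treeId} $T_u$ and commitment $\mathbf{0}_{32}$. It can do this directly, since a zero value no longer reverts, or through a governance function that is no longer gated.

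This anchor escapes every remaining check:
\begin{itemize}
\item Property P7 constrains only anchors with non-zero commitment, so the $K_u$-check never applies.
\item Theorem~\ref{thm:governance} no longer holds, so the zero sentinel makes the adversary's node read as a legitimate governance action inside $u$'s tree.
\item $M_1$ does not touch it, and $M_2$ can only void it after the fact.
\end{itemize}
That is precisely the ``indistinguishable at read time'' failure you were aiming for. The repair also gives the gate a real job in your sufficiency argument: blocking outsiders from the zero-writing governance functions. As written, you cite the gate there but it does no work.

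The paper's own sentence for this case, ``an adversary with operator cooperation could register spoofed tree members,'' is no more precise. You will need to supply this argument yourself.
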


\begin{proof}
\textbf{Sufficiency} follows from the three subgame analyses above: each
variant is closed by its corresponding mechanism.

\textbf{Necessity}: Remove any single mechanism and a variant remains open.
Without cryptographic priority, fraudulent root registration is
indistinguishable from legitimate registration by block timestamp alone
(timestamps can be close; the treeId commitment provides the unique identity
signal). Without governance cascade, malicious child attachments persist
permanently on-chain. Without contract enforcement, an adversary with
operator cooperation could register spoofed tree members.

All three mechanisms are therefore necessary. \qed
\end{proof}

\section{Implementation}
\label{section:implementation}

\subsection{AnchorRegistry on Base L2}

The dual-layer commitment scheme is deployed as AnchorRegistry on Base
(Ethereum L2). The smart contract \texttt{AnchorRegistry.sol} implements
23 artifact types across 8 logical groups, three registration entry points
(\texttt{registerContent}, \texttt{registerGated}, \texttt{registerTargeted}),
and a 6-key access control architecture with a 7-day timelocked recovery
mechanism.

The \texttt{Anchored} event emitted on every registration contains:

\begin{verbatim}
event Anchored(
    string  indexed arId,
    address indexed registrant,
    ArtifactType    artifactType,
    string          arIdPlain,
    string          descriptor,
    string          title,
    string          author,
    string          manifestHash,
    string          parentArId,
    string  indexed treeId,
    string          treeIdPlain,
    bytes32         tokenCommitment
);
\end{verbatim}

The non-indexed \texttt{arIdPlain} and \texttt{treeIdPlain} fields enable
complete registry reconstruction from event logs without a database
dependency (Property P5). The indexed \texttt{treeId} topic enables
targeted tree-level queries via \texttt{eth\_getLogs}, supporting O(1)
lookup of any tree's complete history.

\subsection{Commitment Construction}

The client-side commitment construction proceeds as follows:

\begin{enumerate}
  \item Prior to payment, the client calls a \texttt{/reserve} endpoint
        to pre-generate a unique $C_i$, stored as PENDING.
  \item The client samples $\mathit{salt} \xleftarrow{\$} \{0,1\}^{256}$ and computes $K = \texttt{keccak256}(\mathit{salt})$.
  \item The client computes $\Phi_i = keccak256(K||Ci)$ using ethers.js.
  \item Only $\Phi_i$ and $C_i$ are transmitted; $K$ never leaves the browser.
  \item On payment confirmation, the webhook handler retrieves the PENDING
        $C_i$, passes $\Phi_i$ to \texttt{registerContent}, and atomically
        consumes the reservation.
\end{enumerate}

\subsection{Gas Complexity Analysis}

The commitment scheme introduces the following additional operations per
registration:

\begin{table}[h]
\centering
\begin{tabular}{lrr}
\toprule
Operation & Gas cost & Complexity \\
\midrule
\texttt{tokenCommitments[arId] = $\Phi_i$} & $\approx$20,000 & O(1) \\
\texttt{$\Phi_i$ == $\mathbf{0}_{32}$} check & $\approx$3 & O(1) \\
\texttt{bytes32} event emission & $\approx$375 & O(1) \\
\midrule
\textbf{Total added} & $\approx$20,378 & O(1) \\
\bottomrule
\end{tabular}
\caption{Additional gas cost per registration from the commitment scheme.
At Base L2 gas prices ($\approx$0.001 gwei), this represents a fraction
of a cent per registration. All operations are O(1) with respect to
registry size, tree depth, and anchor count.}
\label{table:gas}
\end{table}

\subsection{Trustless Reconstruction (Property P5)}

The complete provenance tree $\mathcal{T}$ is recoverable from public
blockchain data alone:

\begin{verbatim}
events = eth.get_logs({
    "address": CONTRACT_ADDRESS,
    "topics":  [ANCHORED_EVENT_SIGNATURE],
    "fromBlock": DEPLOY_BLOCK,
    "toBlock": "latest"
})
\end{verbatim}

Each event reconstructs one node. The \texttt{parentArId} field
reconstructs edge structure. The indexed \texttt{treeId} topic enables
targeted per-tree reconstruction. This algorithm runs in O($|V|$) time
and requires no off-chain infrastructure.

The \texttt{anchorregistry} Python package~\cite{AnchorPy26} implements
this reconstruction and provides the \texttt{authenticate\_tree} function
for trustless verification, with full documentation at~\cite{AnchorDocs26}:

\begin{verbatim}
result = authenticate_tree(
    ownership_token = K,
    root_ar_id      = R
)
# result["authenticated"] == True iff:
# H(K || R) == treeId  AND
# H(K || C_i) == tokenCommitment_i for all i
\end{verbatim}

\section{Related Work}
\label{section:related}

\textbf{Blockchain timestamping.} OriginStamp~\cite{Hepp18} and
Bernstein~\cite{Gipp15} provide proof of existence via Merkle tree
aggregation on Bitcoin. These systems batch multiple hashes into a single
transaction, trading immediacy for cost efficiency. Neither system
addresses the operator trust problem, the tree poisoning problem, or
per-record initiation proof. The use of Bitcoin mainnet makes
individual-record timestamping economically infeasible; the emergence of
L2 networks, surveyed comprehensively in~\cite{Sgu21}, fundamentally
alters this tradeoff by enabling per-transaction settlement at fractions
of a cent.

\textbf{NFT ownership systems.} Non-fungible tokens~\cite{Wang21} provide
proof of asset ownership via wallet signatures. However, NFTs address
transferable ownership of tokenized assets rather than the provenance
lineage of digital artifacts. The speculative secondary market properties
of NFTs introduce regulatory concerns absent from a pure provenance registry.
NFT ownership requires the user to hold cryptocurrency and manage a wallet,
eliminating the operator-gated UX advantage.

\textbf{Decentralized identifiers.} The W3C DID specification~\cite{W3C22}
provides decentralized identity infrastructure. DIDs address
\emph{identity} but not artifact \emph{provenance}: they do not model
parent-child relationships between artifacts, do not provide per-artifact
initiation proof, and do not address the operator trust problem or tree
poisoning in gated submission contexts.

\textbf{Mechanism design for blockchain protocols.} The design of incentive-
compatible blockchain protocols has received significant attention~\cite{Rou20}.
EIP-1559~\cite{But19} represents a canonical application of mechanism design
to transaction fee markets. Our work applies similar game-theoretic reasoning
to the provenance attribution and tree integrity problems, complementing
this literature.

\textbf{Commitment schemes.} Cryptographic commitment schemes are
well-studied~\cite{Ped91}. Our construction applies standard keccak256
commitments in a novel dual-layer architecture that binds tree-level
ownership and per-anchor initiation through a single client-side secret,
enabling a unified proof system for provenance attribution and tree
integrity.

\section{Conclusion}
\label{section:conclusion}

We have presented a formal framework for provenance trees as mathematical
objects, introduced the dual-layer commitment scheme to resolve the
operator trust problem, and characterized the tree poisoning problem with
a complete analysis of its three attack variants and their closure
mechanisms.

The central results are: (1) false attribution claims against the operator
are strictly dominated strategies under the dual-layer commitment scheme;
(2) fraudulent root registration is closed by cryptographic priority; (3)
malicious child attachment is closed by governance cascade; (4) tree
identity spoofing is closed by contract enforcement; and (5) all three
closure mechanisms are necessary --- no single mechanism is sufficient.

The honest equilibrium --- in which users initiate only their own
registrations, the operator registers only authorized content and legitimate
governance actions, and adversaries have no profitable tree poisoning
strategy --- is the unique Nash equilibrium of the resulting system.

The construction is O(1) in gas cost with respect to registry scale,
enabling deployment on Ethereum L2 at fractions of a cent per registration.
The complete registry is reconstructible from public event logs in O($|V|$)
time with no off-chain infrastructure dependency.

The provenance tree model, dual-layer commitment scheme, and tree integrity
completeness result represent novel contributions to the design of
trustless attribution infrastructure for the AI era, where the question
of who created what, from what prior work, and when, is increasingly both
legally significant and technically difficult to answer without
cryptographic enforcement.

\section*{Acknowledgments}

The author used AI writing assistance (Anthropic Claude) in drafting this manuscript. All theoretical contributions, formal proofs, system design decisions, and architectural choices are the author's own. The reference implementation described in this paper is deployed and operational at https://anchorregistry.com.


\begin{thebibliography}{1}

\bibitem{Hepp18}
T. Hepp, A. Schoenhals, C. Gondek, and B. Gipp,
\textit{OriginStamp: A Blockchain-Backed System for Decentralized Trusted Timestamping},
it -- Information Technology, 60(5-6):273--281, 2018.

\bibitem{Gipp15}
B. Gipp, N. Meuschke, and A. Gernandt,
\textit{Decentralized Trusted Timestamping using the Crypto Currency Bitcoin},
Proceedings of the iConference, 2015.

\bibitem{Wang21}
Q. Wang, R. Li, Q. Wang, and S. Chen,
\textit{Non-Fungible Token (NFT): Overview, Evaluation, Opportunities and Challenges},
arXiv:2105.07447, 2021.

\bibitem{W3C22}
M. Sporny, D. Longley, M. Sabadello, D. Reed, O. Steele, and C. Allen,
\textit{Decentralized Identifiers (DIDs) v1.0},
W3C Recommendation, July 2022.

\bibitem{Rou20}
T. Roughgarden,
\textit{Transaction Fee Mechanism Design for the Ethereum Blockchain: An Economic Analysis of EIP-1559},
arXiv:2012.00854, 2020.

\bibitem{But19}
V. Buterin,
\textit{Blockchain Resource Pricing},
Ethereum Research, April 2019.

\bibitem{Sgu21}
C. Sguanci, R. Spatafora, and A. M. Vergani,
\textit{Layer 2 Blockchain Scaling: A Survey},
arXiv:2107.10881, 2021.

\bibitem{Ped91}
T. P. Pedersen,
\textit{Non-Interactive and Information-Theoretic Secure Verifiable Secret Sharing},
CRYPTO 1991, LNCS 576, pp. 129--140, Springer, 1991.

\bibitem{AnchorPy26}
I. C. Moore,
\textit{anchorregistry: Trustless Python Client for the AnchorRegistry Provenance Chain},
PyPI, v0.1.3, 2026.
Available: \url{https://pypi.org/project/anchorregistry/}

\bibitem{AnchorDocs26}
I. C. Moore,
\textit{anchorregistry Documentation},
ReadTheDocs, 2026.
Available: \url{https://anchorregistry.readthedocs.io/en/latest/}

\end{thebibliography}
\end{document}